\documentclass{article}

\usepackage{arxiv}

\usepackage[utf8]{inputenc}
\usepackage[T1]{fontenc}
\usepackage{hyperref}
\usepackage{url}
\usepackage{booktabs}
\usepackage{amsfonts}
\usepackage{amsmath}
\usepackage{amssymb}
\usepackage{nicefrac}
\usepackage{microtype}
\usepackage{graphicx}
\usepackage{natbib}
\usepackage{doi}
\usepackage{amsthm}
\usepackage{tikz}
\usepackage{pgf}
\usetikzlibrary{arrows.meta, shapes.geometric, positioning, fit,
                decorations.pathreplacing, calc, backgrounds}
\usepackage{listings}
\usepackage{xcolor}
\usepackage{array}
\usepackage{multirow}
\usepackage{tabularx}
\usepackage{caption}
\usepackage{subcaption}
\usepackage{float}
\usepackage{mathtools}

\newtheorem{theorem}{Theorem}[section]
\newtheorem{corollary}[theorem]{Corollary}

\theoremstyle{definition}
\newtheorem{definition}[theorem]{Definition}
\theoremstyle{remark}

\definecolor{isalblue}{RGB}{0,70,140}
\definecolor{isalgreen}{RGB}{30,110,50}
\definecolor{isalgray}{RGB}{110,110,110}
\definecolor{isalred}{RGB}{160,30,30}
\definecolor{isalbg}{RGB}{248,248,252}

\lstdefinelanguage{IsalProgram}{
  morekeywords={J,H,W,Np,Ns,Nt,Pp,Ps,Pt,Nj,Pj,D,
                Mps,Mpt,Msp,Mst,Mtp,Mts,Mji,
                Ib,Ii,If,Is,
                Cps,Cpt,Csp,Cst,Ctp,Cts,
                Aa,As,Am,Ad,An,Aq,
                Sc,Sx,
                Zp,Zs,Zt,Le,Lp,
                Bp,Bs,Bt,
                L1,L2,L3,L4,L5,L6,L7,L8,L9,L10,L11,L12,L13,L14,L15},
  sensitive=true,
  morecomment=[l]{;},
  morestring=[b]",
  basicstyle=\ttfamily\small,
  keywordstyle=\color{isalblue}\bfseries,
  commentstyle=\color{isalgray}\itshape,
  stringstyle=\color{isalred},
  backgroundcolor=\color{isalbg},
  frame=single,
  framerule=0.4pt,
  rulecolor=\color{isalgray!50},
  numbers=left,
  numberstyle=\tiny\color{isalgray},
  stepnumber=1,
  numbersep=6pt,
  tabsize=4,
  showspaces=false,
  showstringspaces=false,
  breaklines=true,
  captionpos=b,
  xleftmargin=2em,
  framexleftmargin=1.5em,
}
\title{The IsalProgram Programming Language}

\author{ \href{https://orcid.org/0000-0001-8231-5687}{\includegraphics[scale=0.06]{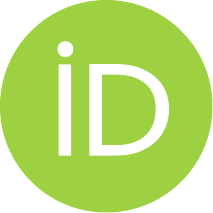}\hspace{1mm}Ezequiel L\'opez-Rubio}\thanks{Corresponding author. ITIS Software. Universidad de M\'alaga. C/ Arquitecto Francisco Peñalosa 18, 29010, Málaga, Spain} \\
	Department of Computer Languages and Computer Science\\
    University of M\'alaga\\
    Bulevar Louis Pasteur, 35\\
    29071 M\'alaga, Spain \\
	\texttt{ezeqlr@lcc.uma.es} \\
}

\renewcommand{\shorttitle}{The IsalProgram Programming Language}

\hypersetup{
  pdftitle  = {The IsalProgram Programming Language},
  pdfsubject= {cs.PL, cs.FL, cs.AI},
  pdfauthor = {Ezequiel L\'opez-Rubio},
  pdfkeywords={programming language, regular language, formal language theory,
               circular doubly linked list, program synthesis, neural transformers,
               Levenshtein distance}
}

\begin{document}
\maketitle

% ------------------------------------------------------------------
\begin{abstract}
We introduce \emph{IsalProgram} (Instruction Set and Language for
Programming), a novel assembly-like programming language with three
distinctive theoretical properties: (1)~it is a \emph{regular language}
in the sense of formal language theory, meaning its programs are
accepted by a finite automaton; (2)~every finite string over the
instruction alphabet is a syntactically valid program; and (3)~it
makes no explicit use of memory addresses or variable names, absolute or
relative.
Programs are finite sequences of tokens drawn from a fixed instruction
set, and are executed on a virtual machine whose sole data structure is
a circular doubly linked list (CDLL) navigated by three data pointers,
with control flow governed by two code pointers.
We give a complete formal definition of the language and its virtual
machine, prove its regularity, and demonstrate its expressive power.
We further discuss IsalProgram's potential advantages as a target
language for neural program synthesis, the amenability of its program
space to metric-based exploration via the Levenshtein edit distance,
and directions for analysing computability and complexity within this
framework.
\end{abstract}

\keywords{%
  regular language \and
  assembly language \and
  circular doubly linked list \and
  program synthesis \and
  neural transformers \and
  Levenshtein distance \and
  formal language theory
}

% ==================================================================
\section{Introduction}
\label{sec:introduction}

Most practical programming languages occupy the upper tiers of the
Chomsky hierarchy~\citep{hopcroft2006automata}: they require context-free
or even context-sensitive grammars, and their parsers must maintain a
non-trivial state.
This is a reasonable price to pay for human readability, but it imposes
costs when languages are used as \emph{objects of study} rather than
as tools for human programmers—costs that manifest in program synthesis,
automated program repair, and the formal analysis of computability.

We present \textbf{IsalProgram}, a language that sits at the computational simplicity
end of the human readability/computational simplicity spectrum without sacrificing
computational power.
IsalProgram is:

\begin{itemize}
  \item \textbf{Regular.}  Its set of valid programs is a regular language.
        Any valid program is accepted by a trivial one-state deterministic finite
        automaton (DFA) after a single left-to-right scan.
  \item \textbf{Total.}  Every finite string over the instruction alphabet
        is a syntactically valid program.  There are no parse errors.
  \item \textbf{Address-free.}  No variable names, constant memory addresses,
        labels, or numeric literals are embedded in the instruction
        stream.  All data live in a circular doubly linked list (CDLL)
        navigated by three pointer registers.
\end{itemize}

Despite its syntactic minimality, IsalProgram is computationally
universal: we show in Section~\ref{sec:universality} that it can
simulate a Turing machine, placing it in the class of
Turing-complete languages~\citep{turing1936computable,sipser2012theory}.

The paper is organised as follows.
Section~\ref{sec:related} situates IsalProgram relative to existing
work on minimal and esoteric languages, neural program synthesis, and
program metrics.
Section~\ref{sec:language} gives the formal definition of the
instruction set.
Section~\ref{sec:vm} defines the virtual machine.
Section~\ref{sec:regularity} proves regularity.
Section~\ref{sec:universality} establishes Turing completeness.
Section~\ref{sec:functions} explains how to develop functions in IsalProgram.
Section~\ref{sec:discussion} discusses theoretical and practical
implications, and Section~\ref{sec:conclusion} concludes.

% ==================================================================
\section{Related Work}
\label{sec:related}

\paragraph{Minimal and esoteric languages.}
The closest predecessor in spirit is Brainfuck~\citep{brainfuck1993},
a Turing-complete language with only eight instructions operating on a
one-dimensional tape.
Like IsalProgram, Brainfuck is Turing-complete and address-free.
Unlike IsalProgram, its set of syntactically valid programs is
\emph{not} a regular language because matching bracket pairs requires
a pushdown automaton~\citep{sipser2012theory}.
IsalProgram removes this restriction entirely: its instruction set is
flat and bracket-free, making syntax trivially checkable. The instruction set of IsalProgram is much more powerful, so typical assembly-language-level calculations can be easily performed. Moreover, IsalProgram supports data types and modularity via function calls.

\paragraph{Assembly and low-level languages.}
Traditional assembly languages~\citep{hyde2010art,patterson2021computer}
expose hardware registers and memory addresses directly.
They require a two-pass assembler to resolve labels and are not regular.
IsalProgram replaces the address space with a dynamically-sized CDLL,
eliminating the need for symbolic addressing.

\paragraph{Neural program synthesis.}
Recent large language models have demonstrated impressive code
generation~\citep{chen2021evaluating,austin2021program} on languages
such as Python and Java.
However, the irregular syntax of such languages requires the model to
maintain complex structural constraints.
A regular target language removes this burden: any token sequence is
valid, so the model need not learn to balance brackets, match types, or
respect scoping rules~\citep{vaswani2017attention}.
IsalProgram has been designed with this use-case in mind.

\paragraph{Metric spaces of programs.}
The Levenshtein edit distance~\citep{levenshtein1966binary} between
two strings is a natural measure of proximity in program space.
For most languages, syntactic validity is not preserved under arbitrary
single-token edits, so the Levenshtein ball around a valid program often
contains many invalid programs.
Because IsalProgram is total, \emph{every} string in the Levenshtein
ball of any IsalProgram program is itself a valid program—a property
exploited in evolutionary and distance-based program
search~\citep{koza1992genetic,navarro2001guided}.

% ==================================================================
\section{The IsalProgram Language}
\label{sec:language}

\subsection{Alphabet and Programs}

Let $\Sigma$ denote the finite set of IsalProgram \emph{tokens}
(instructions).  A program is any element of $\Sigma^{*}$
(including the empty program).

\begin{definition}[IsalProgram program]
An \emph{IsalProgram program} is a finite sequence
$\pi = t_1\, t_2\, \ldots\, t_n$ with each $t_i \in \Sigma$.
The set of all programs is $\mathcal{L} = \Sigma^{*}$.
\end{definition}

\subsection{Instruction Set}
\label{subsec:instructions}

Table~\ref{tab:instructions} lists all instructions together with
their mnemonics.  They are grouped into seven categories:
flow control, pointer movement (code and data), node operations,
copy, arithmetic, string, and load.

\begin{table}[t]
\caption{IsalProgram instruction set.  Here $x,y \in \{p,s,t\}$
         (primary, secondary, ternary) with $x \neq y$.}
\label{tab:instructions}
\centering
\small
\begin{tabular}{@{}llp{7.2cm}@{}}
\toprule
\textbf{Category} & \textbf{Mnemonic} & \textbf{Description} \\
\midrule
\multirow{4}{*}{Flow control}
  & \texttt{J}     & Jump: set IP $\leftarrow$ JP \\
  & \texttt{Bp/Bs/Bt} & Branch if datum at primary/secondary/ternary pointer
                         is non-zero (true); set IP $\leftarrow$ JP \\
  & \texttt{Kp/Ks/Kt}     & Function call: push IP onto the stack and jump to the address stored at primary/secondary/ternary pointer \\
  & \texttt{R}     & Function return: pop the return address from the top of the stack to IP \\
  & \texttt{H}     & Halt: stop execution \\
  & \texttt{W}     & Wait: no operation \\
\midrule
\multirow{6}{*}{Pointer movement}
  & \texttt{M$xy$} & Move data pointer $x$ to position of data pointer $y$ \\
  & \texttt{Mji}   & Move JP to current IP position \\
  & \texttt{Np/Ns/Nt} & Advance primary/secondary/ternary pointer to next CDLL node \\
  & \texttt{Pp/Ps/Pt} & Retreat primary/secondary/ternary pointer to previous CDLL node \\
  & \texttt{Nj}    & Advance JP to next instruction \\
  & \texttt{Pj}    & Retreat JP to previous instruction \\
\midrule
\multirow{2}{*}{Node operations}
  & \texttt{Ib/Ii/If/Is} & Insert new Boolean/integer/float/string node after primary; advance primary \\
  & \texttt{D}     & Delete node at primary (no-op if CDLL has one node); advance pointers that pointed to the deleted node \\
\midrule
Copy
  & \texttt{C$xy$} & Copy datum from node at $x$ into node at $y$ \\
  & \texttt{C$jx$} & Copy JP into node at $x$ \\
  & \texttt{C$xj$} & If node at $x$ contains a valid address then copy it into JP\\
\midrule
\multirow{3}{*}{Arithmetic}
  & \texttt{Aa/As/Am/Ad} & Add/subtract/multiply/divide data at secondary and ternary; store in primary \\
  & \texttt{An}    & Negate datum at secondary; store in primary \\
  & \texttt{Aq}    & Square root of datum at secondary; store in primary \\
\midrule
String
  & \texttt{Sc}    & Concatenate strings at secondary and ternary; store in primary \\
  & \texttt{Sx}    & Extract substring of primary using indices at secondary (start) and ternary (end) \\
\midrule
\multirow{3}{*}{Load}
  & \texttt{Zp/Zs/Zt} & Zero datum at primary/secondary/ternary \\
  & \texttt{L1}--\texttt{L15} & Load integer constant 1–15 into node at primary \\
  & \texttt{Le}/\texttt{Lp} & Load Euler's number $e$ / $\pi$ into node at primary \\
\bottomrule
\end{tabular}
\end{table}

\subsection{Semantics of Error-Prone Instructions}

Several instructions are defined to be \emph{no-ops} on error rather
than raising exceptions.  Specifically:

\begin{itemize}
  \item \texttt{Ad}: if the divisor (ternary node) is zero, do nothing.
  \item \texttt{Aq}: if the argument is negative, do nothing.
  \item \texttt{Sc}: if either operand is not a string, or the primary
        node is not a string, do nothing.
  \item \texttt{Sx}: if primary is not a string or secondary/ternary are
        not integers, do nothing; negative indices are treated as in Python
        (counting from the end).
  \item \texttt{C$xy$}: types must be compatible; otherwise do nothing.
\end{itemize}

This design keeps every program semantically well-defined for all inputs,
reinforcing the totality property.

% ==================================================================
\section{The IsalProgram Virtual Machine}
\label{sec:vm}

\subsection{State}

A virtual machine (VM) state is a tuple
$\sigma = (\mathit{IP}, \mathit{JP}, \mathit{ST}, \mathit{CDLL}, p, s, t)$ where:

\begin{itemize}
  \item $\mathit{IP} \in \{1,\ldots,n+1,\mathsf{halt}\}$ is the
        \emph{instruction pointer}, with $n$ the length of the program.
  \item $\mathit{JP} \in \{1,\ldots,n+1,\mathsf{halt}\}$ is the \emph{jump pointer}.
  \item $\mathit{ST}$ is a stack of instruction positions.
  \item $\mathit{CDLL}$ is a non-empty circular doubly linked list of
        typed nodes.  Each node $v$ carries a datum
        $d(v) \in \mathbb{B} \cup \mathbb{Z} \cup \mathbb{R} \cup \Sigma^{*}$
        and a type $\tau(v) \in \{\texttt{bool},\texttt{int},
        \texttt{float},\texttt{string}\}$.
  \item $p, s, t$ are the primary, secondary, and ternary data pointers,
        each pointing to a node in the CDLL.
\end{itemize}

\subsection{Initialisation}

Given a program $\pi = t_1\ldots t_n$ and a non-empty input list
$\langle v_1,\ldots,v_k \rangle$ of typed values with $k>0$:

\begin{enumerate}
  \item Set $\mathit{ST}$ to the empty stack.
  \item Construct a CDLL with nodes $m_1,\ldots,m_k$ in order, linked
        circularly.
  \item If the program is empty ($n=0$), then set $\mathit{IP} = \mathsf{halt}$, $\mathit{JP} = \mathsf{halt}$. Otherwise, set $\mathit{IP} = 1$, $\mathit{JP} = 1$.
  \item Set $p = s = t = m_1$.
\end{enumerate}

\subsection{Execution}

Execution proceeds as follows.  While $\mathit{IP} \neq \mathsf{halt}$:

\begin{enumerate}
  \item Fetch instruction $t_{\mathit{IP}}$.
  \item Execute the instruction (see Table~\ref{tab:instructions} and
        Section~\ref{subsec:instructions}).
  \item Unless the instruction explicitly modifies $\mathit{IP}$,
        set $\mathit{IP} \leftarrow \mathit{IP}+1$.
  \item If $\mathit{IP} > n$, execution terminates (implicit halt).
\end{enumerate}

The output of the program is the sequence of data values currently
stored in the CDLL at termination, starting from the primary pointer $p$ to the next nodes.

\subsection{The CDLL}

Figure~\ref{fig:cdll} illustrates the CDLL structure with three data
pointers and the relationship between node operations.

\begin{figure}[t]
\centering
\begin{tikzpicture}[
  node distance = 1.6cm,
  box/.style    = {draw, rounded corners=4pt, minimum width=1.2cm,
                   minimum height=0.9cm, font=\small\ttfamily},
  ptr/.style    = {-{Stealth[length=6pt]}, thick},
  circptr/.style= {-{Stealth[length=6pt]}, thick, bend left=30},
  dataptr/.style= {-{Stealth[length=6pt]}, dashed, thick, color=isalblue}
]

% CDLL nodes
\node[box, fill=blue!10]  (n1) {$v_1$};
\node[box, fill=blue!10, right=of n1]  (n2) {$v_2$};
\node[box, fill=blue!10, right=of n2]  (n3) {$v_3$};
\node[box, fill=blue!10, right=of n3]  (n4) {$v_4$};

% forward links
\draw[ptr] (n1.east) -- node[above, font=\tiny]{next}(n2.west);
\draw[ptr] (n2.east) -- node[above, font=\tiny]{next}(n3.west);
\draw[ptr] (n3.east) -- node[above, font=\tiny]{next}(n4.west);
\draw[ptr, bend right=20] (n4.north east) to node[above, font=\tiny]{next} (n1.north west);

% backward links
\draw[ptr, bend left=20] (n2.south) to node[below, font=\tiny]{prev} (n1.south);
\draw[ptr, bend left=20] (n3.south) to node[below, font=\tiny]{prev}(n2.south);
\draw[ptr, bend left=20] (n4.south) to node[below, font=\tiny]{prev}(n3.south);
\draw[ptr, bend right=40] (n1.south west) to node[below, font=\tiny]{prev} (n4.south east);

% data pointers
\node[font=\small\bfseries, color=isalblue, below=0.9cm of n1] (pp) {$p$};
\node[font=\small\bfseries, color=isalred,  below=0.9cm of n2] (sp) {$s$};
\node[font=\small\bfseries, color=isalgreen,below=0.9cm of n4] (tp) {$t$};

\draw[dataptr, color=isalblue]  (pp.north) -- (n1.south);
\draw[dataptr, color=isalred]   (sp.north) -- (n2.south);
\draw[dataptr, color=isalgreen] (tp.north) -- (n4.south);

% legend
\node[right=0.4cm of n4, font=\tiny, align=left] (leg) {
  \textcolor{isalblue}{$p$: primary}\\
  \textcolor{isalred}{$s$: secondary}\\
  \textcolor{isalgreen}{$t$: ternary}
};

\end{tikzpicture}
\caption{Circular doubly linked list (CDLL) with four nodes.
         The three data pointers ($p$, $s$, $t$) can point to arbitrary
         nodes independently.  The list wraps around: the successor of
         $v_4$ is $v_1$ and the predecessor of $v_1$ is $v_4$.}
\label{fig:cdll}
\end{figure}

\subsection{Pointer Architecture}

Figure~\ref{fig:pointers} shows the full pointer architecture of the VM,
including both the code pointers (IP and JP) and the
data pointers ($p$, $s$, $t$).

\begin{figure}[t]
\centering
\begin{tikzpicture}[
  insbox/.style = {draw, fill=gray!15, minimum width=0.9cm,
                   minimum height=0.7cm, font=\tiny\ttfamily, rounded corners=2pt},
  ptr/.style    = {-{Stealth[length=6pt]}, thick},
  label/.style  = {font=\small\bfseries}
]

% Instruction sequence
\node[font=\small, above] at (3.5,3.2) {Instruction sequence $\pi$};
\foreach \i/\tok in {0/Mji,1/Np,2/Bp,3/Nj,4/J,5/Aa,6/H}{
  \node[insbox] (i\i) at (\i*1.15, 1.6) {\tok};
}
\node[font=\tiny] at (7*1.15-0.4, 1.6) {$\cdots$};

% IP arrow
\draw[ptr, color=isalblue!80!black, line width=1.2pt]
  (i2.north) ++(0,0.60) 
  node[above, font=\small\bfseries, color=isalblue!80!black] {IP} -- ++(0,-0.55);

% JP arrow
\draw[ptr, color=isalred!80!black, line width=1.2pt]
  (i4.north) ++(0,0.60) 
  node[above, font=\small\bfseries, color=isalred!80!black] {JP} -- ++(0,-0.55);

% CDLL below
\node[font=\small, below] at (3.5,-1.5) {CDLL};
\foreach \j/\col in {0/blue!15,1/blue!20,2/blue!15,3/blue!15}{
  \node[draw, fill=\col, rounded corners=3pt,
        minimum width=1.1cm, minimum height=0.75cm,
        font=\tiny] (d\j) at (\j*1.6+1.2,-0.65) {$v_{\the\numexpr\j+1\relax}$};
}
\node[font=\tiny] at (4*1.6+1.2-0.3,-0.65) {$\cdots$};

\draw[ptr] (d0.east) -- (d1.west);
\draw[ptr] (d1.east) -- (d2.west);
\draw[ptr] (d2.east) -- (d3.west);

% data pointer arrows
\draw[ptr, dashed, color=isalblue, line width=1pt]
  (d0.north) -- ++(0,0.55)
  node[above, font=\scriptsize\bfseries, color=isalblue]{$p$} -- ++(0,-0.55);

\draw[ptr, dashed, color=isalred, line width=1pt]
  (d1.north) -- ++(0,0.55)
  node[above, font=\scriptsize\bfseries, color=isalred]{$s$} -- ++(0,-0.55);

\draw[ptr, dashed, color=isalgreen, line width=1pt]
  (d2.north) -- ++(0,0.55)
  node[above, font=\scriptsize\bfseries, color=isalgreen]{$t$} -- ++(0,-0.55);

\end{tikzpicture}
\caption{Full pointer architecture of the IsalProgram VM.
         Two instruction-level pointers (IP and JP) index into the
         program sequence; three data-level pointers ($p$, $s$, $t$)
         index into the CDLL.}
\label{fig:pointers}
\end{figure}

% ==================================================================
\section{Regularity of IsalProgram}
\label{sec:regularity}

\begin{theorem}
  \label{thm:regular}
  The set of IsalProgram programs $\mathcal{L} = \Sigma^{*}$ is a regular
  language.
\end{theorem}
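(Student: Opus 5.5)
The plan is to exhibit an explicit finite automaton, or equivalently a regular expression, that accepts exactly $\mathcal{L} = \Sigma^{*}$. Everything rests on the fact that $\Sigma$ is finite. Counting the entries of Table~\ref{tab:instructions}, $\Sigma$ consists of \texttt{J}, \texttt{Bp/Bs/Bt}, \texttt{Kp/Ks/Kt}, \texttt{R}, \texttt{H}, \texttt{W}, the six \texttt{M$xy$} with $x\neq y$, \texttt{Mji}, \texttt{Np/Ns/Nt}, \texttt{Pp/Ps/Pt}, \texttt{Nj}, \texttt{Pj}, \texttt{Ib/Ii/If/Is}, \texttt{D}, the copy instructions, \texttt{Aa/As/Am/Ad/An/Aq}, \texttt{Sc}, \texttt{Sx}, \texttt{Zp/Zs/Zt}, \texttt{L1}--\texttt{L15}, and \texttt{Le/Lp}. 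This is a finite list. Each instruction is treated as a single atomic symbol; a program is a sequence of tokens, not of characters.

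With $\Sigma$ fixed and finite, I will define the DFA $A = (Q,\Sigma,\delta,q_0,F)$ with $Q = \{q_0\}$, $F = \{q_0\}$, and $\delta(q_0,a) = q_0$ for every $a \in \Sigma$. A short induction on the length $n$ of $\pi = t_1\cdots t_n$ shows that the extended transition function satisfies $\hat\delta(q_0,\pi) = q_0 \in F$:
\begin{itemize}
  \item the base case is $\hat\delta(q_0,\varepsilon)=q_0$, so the empty program is accepted;
  \item the inductive step applies $\delta(q_0,t_{n})=q_0$.
\end{itemize}
Hence $L(A) = \Sigma^{*} = \mathcal{L}$. As a cross-check I will note that $\mathcal{L}$ is also denoted by the regular expression $(a_1 \mid a_2 \mid \cdots \mid a_m)^{*}$, where $a_1,\dots,a_m$ enumerate $\Sigma$. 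This follows because finite unions and the Kleene star preserve regularity. It also matches the claim in Section~\ref{sec:introduction} that a one-state DFA suffices.

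The only real obstacle is the step of justifying that $\Sigma$ is finite and that tokenisation is unambiguous. This requires confirming that the index families (the \texttt{C$xy$}, \texttt{C$jx$}, \texttt{C$xj$} copies and \texttt{L1}--\texttt{L15}) range over finite sets, and that no instruction carries a numeric literal or other operand; this is guaranteed by the address-free design. If the theorem were instead read at the character level, I would add a remark that the set of concatenations of the finitely many mnemonic strings is $M^{*}$ for a finite set of strings $M$, which is again regular. Regularity therefore holds at either level, though the token-level reading is the one intended by the definition of $\mathcal{L}$.
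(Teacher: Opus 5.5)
Your proposal is correct and follows the paper's proof essentially verbatim. The paper also exhibits the one-state DFA $(\{q_0\},\Sigma,\delta,q_0,\{q_0\})$ with $\delta(q_0,a)=q_0$ for all $a\in\Sigma$, and notes that $\Sigma^{*}$ is the Kleene closure of a finite alphabet. Your induction on $\hat\delta$ and your regular-expression cross-check only spell out what the paper leaves implicit. The concern about unambiguous tokenisation is not actually needed: $\mathcal{L}$ is defined with tokens as atomic symbols, and, as you note yourself, $M^{*}$ is regular at the character level regardless of ambiguity.
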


\begin{proof}
  By definition, $\mathcal{L} = \Sigma^{*}$ is the Kleene closure of
  the finite alphabet~$\Sigma$~\citep{kleene1956representation}.
  The language $\Sigma^{*}$ is accepted by the one-state DFA
  $M = (\{q_0\}, \Sigma, \delta, q_0, \{q_0\})$ where
  $\delta(q_0, a) = q_0$ for all $a \in \Sigma$.
  Since $M$ is a DFA, $\mathcal{L}$ is regular.
\end{proof}

The key design decision that enables Theorem~\ref{thm:regular} is the
absence of any bracket-matching, label-resolution, or type-annotation
requirements: there is simply nothing to check.
Figure~\ref{fig:dfa} depicts the minimal DFA for $\mathcal{L}$.

\begin{figure}[t]
\centering
\begin{tikzpicture}[
  state/.style={draw, circle, minimum size=1.1cm, font=\small},
  every edge/.style={-{Stealth[length=6pt]}, thick}
]
\node[state, double] (q0) {$q_0$};
\draw[->] (-1.5,0) -- (q0);
\path (q0) edge [loop right, looseness=8] node [right, font=\small]
      {$a \in \Sigma$} (q0);
\end{tikzpicture}
\caption{Minimal DFA for $\mathcal{L} = \Sigma^{*}$.
         The single state $q_0$ is both the start state and the unique
         accepting state.  Every token loops back to $q_0$.}
\label{fig:dfa}
\end{figure}

\begin{corollary}
  Every finite string over $\Sigma$ is a syntactically valid
  IsalProgram program.
\end{corollary}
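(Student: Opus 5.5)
The plan is to read the corollary straight off the definition of an IsalProgram program together with Theorem~\ref{thm:regular}. Take an arbitrary finite string $w = a_1 a_2 \ldots a_n$ with each $a_i \in \Sigma$, allowing $n = 0$ (the empty string). By the definition of an IsalProgram program, any finite sequence of tokens from $\Sigma$ is a program, so $w \in \Sigma^{*} = \mathcal{L}$. Syntactic validity means membership in $\mathcal{L}$, so this already proves the claim.

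To tie the corollary to the regularity result, as its label suggests, I will also give the automaton-level argument. Run the one-state DFA $M = (\{q_0\}, \Sigma, \delta, q_0, \{q_0\})$ from the proof of Theorem~\ref{thm:regular} on $w$. An induction on $n$ shows that the extended transition function satisfies $\hat\delta(q_0, w) = q_0$:
\begin{itemize}
  \item \emph{Base case.} For $n = 0$ we have $\hat\delta(q_0, \varepsilon) = q_0$ by definition.
  \item \emph{Inductive step.} $\hat\delta(q_0, w a) = \delta(\hat\delta(q_0, w), a) = \delta(q_0, a) = q_0$ for every $a \in \Sigma$.
\end{itemize}
Since $q_0$ is accepting, $M$ accepts $w$. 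Hence the set of strings accepted by $M$, which is exactly the set of syntactically valid programs, contains every finite string over $\Sigma$.

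I would add one remark to pin down what the corollary asserts. Syntactic validity is membership in $\mathcal{L}$ and nothing more, because there is no bracket matching, label resolution or typing to check. Semantic well-definedness, meaning the no-op conventions for error-prone instructions, is a separate matter and is not needed here.

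The only step I expect to need care is the empty program ($n=0$). It is covered explicitly by $\varepsilon \in \Sigma^{*}$, by the fact that the start state is accepting, and by the definition's phrase ``including the empty program''. Beyond that there is no real difficulty: the corollary is an immediate restatement of $\mathcal{L} = \Sigma^{*}$.
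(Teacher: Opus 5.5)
Your proposal is correct and matches the paper's approach: the paper gives no separate proof and treats the corollary as immediate from $\mathcal{L} = \Sigma^{*}$ and the one-state DFA in the proof of Theorem~\ref{thm:regular}. Your explicit induction on $\hat\delta$ and your check of the empty program are sound, and they simply spell out details the paper leaves implicit.
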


\begin{corollary}
  The set of IsalProgram programs of length at most $n$ has cardinality
  $\sum_{k=0}^{n} |\Sigma|^k$.
\end{corollary}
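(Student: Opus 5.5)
The plan is to count programs by length and then sum over lengths. By the definition of an IsalProgram program and the preceding corollary, the set of programs of length at most $n$ is exactly
\[
\mathcal{L}_{\le n} \;=\; \{\pi \in \Sigma^{*} : |\pi| \le n\} \;=\; \bigcup_{k=0}^{n} \Sigma^{k}.
\]
There is no syntactic filtering at all, since every string over $\Sigma$ is a valid program. So the count is purely combinatorial.

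First, I will observe that the sets $\Sigma^{k}$ for distinct $k$ are pairwise disjoint, because a finite sequence has a unique length. Consequently
\[
|\mathcal{L}_{\le n}| = \sum_{k=0}^{n} |\Sigma^{k}|.
\]

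Second, I will show $|\Sigma^{k}| = |\Sigma|^{k}$ by identifying $\Sigma^{k}$ with the $k$-fold Cartesian product $\Sigma \times \cdots \times \Sigma$ via $t_1 \ldots t_k \mapsto (t_1,\ldots,t_k)$. This map is a bijection, since a program is determined by its token sequence. The product rule then gives the count, and the case $k=0$ contributes exactly one program, the empty program, matching $|\Sigma|^{0}=1$. Substituting into the sum yields $\sum_{k=0}^{n}|\Sigma|^{k}$. Optionally, when $|\Sigma|>1$ this equals $(|\Sigma|^{n+1}-1)/(|\Sigma|-1)$.

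The only point requiring care is that $\Sigma$ must be treated as a set of atomic tokens rather than of characters. Mnemonics such as \texttt{Np} and \texttt{L1} are single symbols, so strings like \texttt{L1}\,\texttt{5} versus \texttt{L15} must not be conflated. I will state explicitly that programs are sequences of elements of $\Sigma$, as the definition says, so distinct sequences are distinct programs and no double counting occurs. Beyond that, the argument is routine.
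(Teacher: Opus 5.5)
Your proposal is correct and matches the paper's reasoning. The paper gives no written proof and treats the corollary as immediate from $\mathcal{L}=\Sigma^{*}$ (Theorem~\ref{thm:regular}), which is exactly your decomposition: split the programs into the pairwise disjoint length classes $\Sigma^{k}$ for $0 \le k \le n$, each of size $|\Sigma|^{k}$, and sum.
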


% ==================================================================
\section{Computational Universality}
\label{sec:universality}

Although IsalProgram has a trivially simple syntax, its virtual machine
is computationally universal.

\begin{theorem}
  \label{thm:universal}
  IsalProgram is Turing-complete.
\end{theorem}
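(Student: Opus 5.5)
Given an arbitrary single-tape Turing machine $T=(Q,\Gamma,\delta,q_0,q_{\mathrm{acc}},q_{\mathrm{rej}})$, we construct an IsalProgram program $\pi_T$ that simulates it. Tape symbols are encoded as integers $0,\dots,|\Gamma|-1$, with the blank encoded as $0$. The tape is represented by the CDLL, and the primary pointer $p$ plays the role of the head.

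The CDLL is circular, so the tape must be made unbounded. We keep a single \emph{sentinel} node, holding a distinguished marker value, that separates the right end of the used tape from its left end. When the head moves onto the sentinel, the program inserts a fresh blank node with \texttt{Ii} (or \texttt{Pp} followed by \texttt{Ii} when extending to the left) and repositions $p$. Because the CDLL grows without bound, the simulated tape is unbounded, which is the one feature that prevents the VM from being a finite-state device.

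The finite control is compiled into straight-line code, and IsalProgram has no labels. Two mechanisms supply control flow. First, \texttt{Mji} sets JP to the current IP. Second, the \texttt{C$jx$}/\texttt{C$xj$} pair stores JP in a data node and restores it, together with \texttt{Bp/Bs/Bt} and \texttt{J}. The plan is to lay out one code block per state $q\in Q$. At program start, a prologue walks through every block, executes \texttt{Mji} at its entry, and saves that entry address with \texttt{C$jx$} into a dedicated \emph{address table}. This table is a fixed group of nodes adjacent to the sentinel, reached by moving the secondary or ternary pointer. A jump to state $q'$ then consists of moving $s$ to the table entry for $q'$, executing \texttt{Csj}, and executing \texttt{J}.

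Within a block for state $q$, the code branches on the symbol under the head. For each symbol $a\in\Gamma$:
\begin{enumerate}
\item Load the constant $a$ into a scratch node pointed to by $t$, using \texttt{Zt} and then \texttt{L}$a$. For larger alphabets, use sums built with \texttt{Aa}.
\item Compute the difference between the current cell and $a$ with \texttt{As}, and turn it into a zero test.
\item Set JP past the handler, then use \texttt{Bp} to skip the handler when the difference is non-zero.
\end{enumerate}
The handler for the pair $(q,a)$ does three things. It writes the symbol $b$ given by $\delta(q,a)=(q',b,D)$, using \texttt{Zp} followed by a load. It moves the head with \texttt{Np} or \texttt{Pp} and performs the sentinel check-and-extend step described above. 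Finally, it jumps to the block for $q'$ via the address table. Accepting and rejecting states map to \texttt{H}, after first writing a designated output value at the node pointed to by $p$.

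To position JP a fixed number of instructions ahead without labels, the code uses \texttt{Mji} followed by a constant number of \texttt{Nj} steps. This offset is known at compile time, so it is the address-free replacement for label arithmetic.

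The main obstacle is bookkeeping rather than any conceptual difficulty. The arithmetic and data-movement instructions all write into the node at $p$ (\texttt{Aa} and \texttt{As}, for example, store their result at the primary node). Zero tests and constant loads therefore require temporarily moving $p$ off the tape head to a scratch node and returning it afterwards, without losing the head position. I would handle this by parking the head position in $s$ with \texttt{Msp} before each computation and restoring it with \texttt{Mps} afterwards. I would also keep the scratch and address-table nodes in a fixed region next to the sentinel, so that reaching them costs a bounded, precomputed number of pointer moves. Since the sentinel region is not a fixed distance from a head that can wander arbitrarily far, this reachability step is where care is needed.

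With these invariants in place, the correctness argument is an induction on the number of TM steps. Each TM step corresponds to one finite block of VM steps. After that block, the CDLL segment outside the sentinel region equals the TM tape with trailing blanks, $p$ points at the head cell, and IP sits at the entry of the current state's block. It follows that $\pi_T$ halts exactly when $T$ halts, and with the same tape contents.
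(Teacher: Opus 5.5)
\textbf{The gap: reaching the scratch region from the head.} You identify this step yourself but leave it open, and it is not mere bookkeeping. Straight-line code makes only a bounded number of \texttt{N}/\texttt{P} moves, and there is neither an address nor a pointer-equality test. So the scratch/table region can be reached from an arbitrarily distant head only in two ways: it is kept permanently under some pointer, or it is re-located by a scanning loop each time. The proposal commits to neither.

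Step~1 tacitly assumes $t$ sits in the region. Moreover, \texttt{L}$a$ writes only at $p$, not at $t$, so you in fact need $p$ there too, which is possible only via \texttt{Mpt}. Yet nothing keeps $t$ in the region, and your last paragraph treats reaching it as unsolved.

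The missing ingredient is an explicit pointer invariant:
\begin{itemize}
\item Build the region during initialisation, next to $m_1$ where the head starts.
\item Put $t$ on a region node $X$ and never let it leave the region (moves within it only by compile-time-known offsets).
\item Let $p$ hold the head between steps, with $s$ as the single working pointer.
\end{itemize}
A test for symbol $a$ is then \texttt{Msp}, \texttt{Mpt}, \texttt{Zp}, \texttt{L}$a$, \texttt{As}. With $p=t=X$ this leaves $d(X)=d(\text{head})-a$. Follow it with \texttt{Mji}, a fixed number of \texttt{Nj}, \texttt{Bp}, and \texttt{Mps} at the start of both continuations. The sentinel check is the same test with the marker value. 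A table jump is \texttt{Mst}, a fixed number of \texttt{Ns}, \texttt{Csj}, \texttt{J}. With this invariant stated, your induction on TM steps goes through.

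\textbf{Three smaller repairs.}
\begin{itemize}
\item[(i)] The region lies between the right and left ends of the tape in cyclic order. With one sentinel and the table adjacent to it, one end of the tape abuts a table or scratch node directly. That node's integer content can equal a symbol code, so extension is not triggered and the table gets overwritten. Put a marker node at both ends of the region.
\item[(ii)] Building constants above $15$ with \texttt{Aa} needs $s$ and $t$ on the summands and $p$ on the result. That puts all three pointers in the region and loses the head. Either precompute such constants in the prologue, while the head is still at a known offset from the region, or first reduce to a machine with at most $15$ symbols (binary suffices).
\item[(iii)] The address table and its prologue are dispensable. Because \texttt{Pj} exists, \texttt{Mji} plus a compile-time number of \texttt{Nj} or \texttt{Pj} reaches any block, forward or backward.
\end{itemize}

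\textbf{Comparison with the paper.} Once repaired, your route genuinely differs from the paper's sketch. The paper simulates a two-tape machine: $p$ is the head, $s$ marks the boundary between the concatenated tapes, and the state number is stored at the node under $t$. It tests the head directly with \texttt{Bp/Bs/Bt}, writes by copying, and dispatches with \texttt{Kx}/\texttt{R}. You instead keep the state in the control flow, so you need neither a state register nor a call stack, only \texttt{J} with compile-time \texttt{JP} offsets. You also account explicitly for the three-pointer budget. The paper's sketch fixes a role for every pointer and does not spell out where its copies and tests take their operands from. Its version is shorter and showcases \texttt{Kx}/\texttt{R}; yours is more explicit about what the code actually does.
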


\begin{proof}[Proof sketch]
  We simulate a two-tape Turing machine (TM)~\citep{turing1936computable,
  minsky1967computation}.  The CDLL models both tapes concatenated:
  the primary pointer acts as the TM's head, and the secondary pointer
  marks the boundary between the two tapes.
  The ternary pointer holds the current state index as an integer.

  \emph{Tape operations.}
  \texttt{Np}/\texttt{Pp} move the head right/left.
  \texttt{Ii}/\texttt{If}/\texttt{Is} extend the tape on demand
  (simulating the infinite tape).
  \texttt{D} contracts the tape, which is not required to simulate the infinite tape but could be employed to reduce the number of nodes in the CDLL.

  \emph{State transitions.}
  Each TM transition
  $\delta(q, a) = (q', a', d)$
  is encoded as a fixed-length IsalProgram subprogram that
  (i) tests the symbol at the head using \texttt{Bp/Bs/Bt},
  (ii) writes the new symbol with a copy instruction, and
  (iii) updates the state register and moves the head.

  \emph{Control flow.}
  The \texttt{Kx/R} mechanism replaces the standard dispatch table:
  \texttt{Kx} saves the return address and performs the call; \texttt{R} returns from the call.  %A linear scan of state-handler subprograms replaces indirect

  Since every TM can be encoded in this way, and IsalProgram programs
  can be executed on the IsalProgram VM, IsalProgram is
  Turing-complete~\citep{sipser2012theory}.
\end{proof}

The tension between Theorem~\ref{thm:regular} and
Theorem~\ref{thm:universal} is only apparent: regularity is a property
of the \emph{syntax} (the set of strings), while Turing-completeness
is a property of the \emph{semantics} (what the programs compute).
As~\citet{brainfuck1993} demonstrated, a language with a trivially
simple syntax can still be Turing-complete.

\section{Functions}
\label{sec:functions}

IsalProgram supports a structured notion of \emph{functions} that
avoids any literal memory address constant in the code.
A function call works as follows:

\begin{enumerate}
  \item The caller ensures that the primary pointer and the
        subsequent CDLL nodes hold the function's input arguments.
  \item The function body inserts temporary nodes for local
        computation immediately after those input nodes.
  \item Upon return, the function deletes all temporary nodes and
        stores its results into the primary pointer and the
        subsequent CDLL nodes.
\end{enumerate}

Figure~\ref{fig:functioncall} illustrates the function call protocol. The call/return mechanism uses the \texttt{Kx/R} pair: \texttt{Kx} pushes the current IP to the stack and jumps to the function at the address stored in the node pointed by x, and a final \texttt{R} at the end of the
function body returns control to the address retrieved from the stack.
This is analogous to a \texttt{call}/\texttt{ret} pair in assembly,
without any function address specification in the code.

\begin{figure}[t]
\centering
\begin{tikzpicture}[
  block/.style={draw, fill=green!8, rounded corners=4pt,
                minimum width=3.2cm, minimum height=0.8cm,
                font=\small, align=center},
  arr/.style={-{Stealth[length=6pt]}, thick},
  dashed arr/.style={-{Stealth[length=5pt]}, dashed, thick, gray}
]

\node[block] (caller) at (0,0)    {Caller sets up \\ args and function address};
\node[block] (mji)    at (0,-1.4) {Caller stores function address \\ $\to$ node at x};
\node[block] (call) at (0,-2.8) {\texttt{Kx}\\Call the function at x};
\node[block] (body)   at (0,-4.2) {Function body\\(uses tmp nodes)};
\node[block] (clean)  at (0,-5.6) {Delete tmp nodes\\store results};
\node[block] (ret)    at (0,-7.0) {\texttt{R}\\return to caller};
\node[block] (resume) at (0,-8.4) {Caller resumes};

\draw[arr] (caller) -- (mji);
\draw[arr] (mji)    -- (call);
\draw[arr] (call)    -- (body);
\draw[arr] (body)   -- (clean);
\draw[arr] (clean)  -- (ret);
\draw[arr] (ret)    -- (resume);

% side annotation
\draw[dashed arr, bend left=40]
  (call.east) to node[right, font=\scriptsize, align=left]{stack top =\\ return address} (ret.east);

\end{tikzpicture}
\caption{Function call protocol in IsalProgram.
         \texttt{Kx} saves the return address into the stack, and jumps to the function whose address is stored at the node pointed by x. After the function has completed its calculations, a final \texttt{R} at the end of the function body returns control to the caller.}
\label{fig:functioncall}
\end{figure}

% ==================================================================
\section{Discussion}
\label{sec:discussion}

\subsection{IsalProgram as a Target for Neural Program Synthesis}

Recent work has shown that large language models (LLMs) trained on
human-written code~\citep{chen2021evaluating,austin2021program} can
synthesise programs from natural-language descriptions.
However, all languages studied so far have complex, non-regular syntax:
the model must simultaneously learn both the \emph{meaning} and the
\emph{form} of programs.

IsalProgram separates these concerns.
Because $\mathcal{L} = \Sigma^{*}$, the model's output is \emph{always}
syntactically valid; no post-generation repair or filtering is needed.
The model can concentrate entirely on the semantics—the
\emph{what-to-compute} problem.

Formally, for a model parameterised by $\theta$ that generates a
sequence of tokens $t_1, \ldots, t_n$ via
\begin{equation}
  p_\theta(\pi) = \prod_{i=1}^{n} p_\theta(t_i \mid t_1,\ldots,t_{i-1}),
\end{equation}
validity is guaranteed for all $\theta$ and all $n$, which simplifies
training objectives and evaluation~\citep{vaswani2017attention}.

\subsection{The Levenshtein Metric on Program Space}

The edit distance~\citep{levenshtein1966binary} between two programs
$\pi, \pi'$ is
\begin{equation}
  d(\pi, \pi') = \text{ed}(\pi, \pi'),
\end{equation}
the minimum number of single-token insertions, deletions, and
substitutions needed to transform $\pi$ into $\pi'$.  The totality of IsalProgram means the shortest path between two programs according to the edit distance is composed of valid programs.

This makes IsalProgram amenable to nearest-neighbour program search,
genetic programming~\citep{koza1992genetic}, and mutation-based automatic program repair: any mutant is a valid program.

\subsection{Computability and Complexity Analysis}

Standard computability arguments are developed with respect to Turing
machines~\citep{turing1936computable,sipser2012theory}.
The IsalProgram VM provides an alternative, arguably more natural,
model: programs are finite strings rather than state-transition tables,
and the memory model (CDLL) is more structured than a tape but more
flexible than a fixed-size register file.

We conjecture that the standard time and space complexity classes
(PTIME, PSPACE, etc.~\citep{papadimitriou1994computational}) can be
characterised cleanly in terms of the number of CDLL operations and
the maximum CDLL size, respectively.
Formalising this connection is left as future work.

\subsection{Limitations}

\paragraph{Three-pointer constraint.}
The three data
pointers impose a strict register-pressure cost: some computations
require careful sequencing of pointer-move instructions that would be
trivial with four or more pointers.
Extending the VM to four or more pointers while preserving regularity is
straightforward (since regularity depends only on the absence of
syntax constraints, not on the number of pointers).

\paragraph{No immediate literals.}
IsalProgram can load only the constants $0$, $1$--$15$, $e$, and $\pi$
directly.
Larger constants must be composed arithmetically.
This is a feature—it keeps the alphabet finite—but it makes some
programs verbose.

\paragraph{Halting problem.}
As with all Turing-complete languages, the halting problem for
IsalProgram is undecidable~\citep{turing1936computable}.
%The implicit halt at the end of the instruction sequence provides a natural timeout for programs of bounded length.

% ==================================================================
\section{Conclusion}
\label{sec:conclusion}

We have introduced IsalProgram, a novel programming language with three
remarkable properties that coexist in a single
design: (1)~its syntax is a regular language (in fact $\Sigma^*$),
(2)~every string is a valid program, and (3)~it is Turing-complete
despite using no variable names or literal memory addresses.

We have given a precise formal specification of the instruction set and
virtual machine, proved regularity and Turing-completeness, and
developed a function call protocol.

The language opens three concrete directions for future research.
First, the use of IsalProgram as a target language for neural program
synthesis, where the totality of the language eliminates the syntactic
validity constraint on generated sequences.
Second, the exploitation of the Levenshtein metric over $\Sigma^*$ for
evolutionary or distance-based program search.
Third, the development of a complexity theory grounded in CDLL
operations rather than Turing machine steps.

We believe IsalProgram offers a uniquely clean theoretical laboratory
for studying the relationship between syntax, semantics, and
computation.

% ==================================================================
%\section*{Acknowledgements}

%The authors thank the reviewers for their constructive feedback.

% ==================================================================
\bibliographystyle{unsrtnat}
\bibliography{references}

\end{document}